\definecolor{URLColorBlue}{HTML}{2A1B81}
\definecolor{BlueGreen}{cmyk}{0.85,0,0.33,0}
\definecolor{RawSienna}{HTML}{A0522D}
\definecolor{TealLight}{HTML}{00C798}
\definecolor{Teal}{HTML}{008080}
\definecolor{orange}{HTML}{FF7700}
\definecolor{enji-ro}{HTML}{9D2933}
\definecolor{hana'asagi}{HTML}{1D697C}
\definecolor{seiheki}{HTML}{3A6960}
\definecolor{too'oo}{HTML}{FFB61E}
\definecolor{byakuroku}{HTML}{A5BA93}
\definecolor{konjoo-iro}{HTML}{003171}
\definecolor{hanada}{HTML}{044F67}
\definecolor{koorozen}{HTML}{592B1F}
\definecolor{ootan}{HTML}{FF4E20}
\definecolor{arazome}{HTML}{FFB3A7}
\definecolor{midori}{HTML}{2A603B}
\definecolor{Noshimehana-iro}{HTML}{344D56}
\theoremstyle{plain}
\newtheorem{proposition}{Proposition}
\newtheorem{algorithm}{Algorithm}
\theoremstyle{definition}
\newtheorem{definition}{Definition}
\theoremstyle{remark}
\newtheorem{remark}{Remark}
\author[1]{Alessandro Tomasi\thanks{altomasi@fbk.eu}}
\author[2]{Alessio Meneghetti\thanks{alessio.meneghetti@unitn.it}}
\affil[1]{Security and Trust, Fondazione Bruno Kessler}
\affil[2]{Department of Mathematics, University of Trento}
\begin{document}

\title{Binary linear code weight distribution estimation by random bit stream compression}

\maketitle

\begin{abstract}A statistical estimation algorithm of the weight distribution of a linear code is shown, based on using its generator matrix as a compression function on random bit strings.\end{abstract}

\section{Introduction}

We considered in \cite{Tomasi_Meneghetti_Sala_17} the use of linear code generator matrices as conditioning functions for the output of a random number generator, and showed how the weight distribution of the code determines the distribution of the resulting random variable. Weight distributions are not readily available for many codes; we therefore here consider using random output of a specified quality to estimate the weight distribution, leading to Algorithm \ref{alg:binary_weight_estimation}.

We note from the outset that our solution to the problem of computing the full weight distribution is inefficient, potentially costing more than simply enumerating every weight by brute force. We describe it for the sake of curiosity as an unusual approach; although it can be extended to non-binary codes, the inefficiency of the algorithm prompts us to consider here the binary case only.

Estimates of the weight distribution of linear codes have mostly been deterministic algorithms, for instance based on linear programming \cite{Kasami_Fujiwara_Lin_85, Sala_Tamponi_00}, or approximate analytic bounds \cite{Sole_90, Krasikov_Litsyn_95}. While it is inefficient, Algorithm \ref{alg:binary_weight_estimation} is applicable to any linear code with known generator matrix. In contrast with a deterministic enumeration, it also allows one to make a statistical inference about the whole distribution based on a sample.

There exist probabilistic algorithms for the determination of the minimum weight, generally referred to as information set decoding (ISD). We mostly follow the summary in \cite{Canteaut_Chabaud_98} and \cite{May_Meurer_Thomae_11}, where \cite{Lee_Brickell_88} is identified as the first example, \cite{Leon_88} as an improvement, and \cite{Stern_88} as a somewhat different approach but with the previously best results. These algorithms attempt to solve a rather different problem, but they involve the use of randomness and the syndrome decoding procedure is somewhat reminiscent of our algorithm, so we summarise them and show the common ground in Section \ref{sec:related_work}.

\section{Algorithm derivation}
\label{sec:algorithm_derivation}

We begin by assuming to have access to a random bit generator (RBG) outputting a stream of independent Bernoulli random variables $B$ with fixed and known probability of success. Let the balance of $B$ be defined as
\begin{align*}
	\beta	& = \mathbb{P}(B=0) - \mathbb{P}(B=1)	\\
			& = 1-2\mathbb{E}[B] \,,
\end{align*}
with the so-called bias of $B$ being $\varepsilon/2 = |\beta|/2$.

Let $X$ be a vector of $n$ random bits output by the RBG, and $G$ the generator matrix of a binary $[n,k,d]$ linear code $\mathcal{C}$. We showed in \cite{Tomasi_Meneghetti_Sala_17} how to compute the probability mass function of $Y = GX$, assuming the weight distribution of $\mathcal{C}$ is known. The weight distribution is the sequence
\begin{align}\label{eqn:A_l}
	A_l = \#\set{\mathbf{c} \in \mathcal{C} | w(\mathbf{c})=l}
\end{align}
where $w(\mathbf{c})$ is the Hamming weight of a codeword $\mathbf{c} \in (\mathbb{F}_2)^n$.

The vector $Y \in (\mathbb{F}_2)^k$ is a random variable with probability mass function
\begin{align*}
	\mu_Y		& \in \mathbb{R}^{2^k}, 	\\
	\mu_Y (j)	& = \mathbb{P}(Y = \mathbf{j}) 	\,,
\end{align*}
where in writing $j$ and $\mathbf{j}$ we use the binary representation of integers as vectors $a \in \mathbb{Z}_{2^k}$
\begin{align*}
	\mathbf{a}	& = \left\{a_j \,\middle|\, a = \sum_{j=0}^{k-1} a_j 2^j \right\} \in (\mathbb{F}_2)^k	\,.
\end{align*}

The characteristic function of a random variable $Y$ with mass function $\mu_Y$ is its inverse Fourier transform, which in the binary case is the Hadamard or Walsh transform: 
\begin{align*}
	\chi_Y		& = H \mu_Y
\end{align*}

We can summarise our main reasoning as follows: in Propositions \ref{prop:selection}, \ref{prop:linear_combination}, we show how the weight distribution of $\mathcal{C}$ determines the distribution of $Y$. If the full weight distribution $A_l$ of the corresponding code $\mathcal{C}$ is not known, but its generator matrix $G$ is given, we can apply the compression function to a stream of independent bits of specified balance and then estimate $A_l$ from the characteristic function of the compressed stream $Y$, as described in Algorithm \ref{alg:binary_weight_estimation}.

\begin{proposition}	\label{prop:selection}

	The $b$-th row of $H\mu_Y$ corresponds to the $c$-th row of $H \mu_X$, selected by the code word $\mathbf{c}^T = \mathbf{b}^T G$

\end{proposition}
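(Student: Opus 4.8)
The plan is to work directly from the definition of the Walsh--Hadamard transform, writing the $\mathbf{b}$-th entry of $H\mu_Y$ as an expectation over $Y$ and then pushing the substitution $Y = GX$ through the transform. Since the rows of the Hadamard matrix carry the characters $(-1)^{\langle \mathbf{b}, \mathbf{y}\rangle}$, the characteristic-function convention gives the $\mathbf{b}$-th component of $H\mu_Y$ as
\begin{align*}
	(H\mu_Y)(\mathbf{b}) = \sum_{\mathbf{y} \in (\mathbb{F}_2)^k} (-1)^{\langle \mathbf{b}, \mathbf{y}\rangle}\, \mathbb{P}(Y = \mathbf{y}) = \mathbb{E}\!\left[(-1)^{\langle \mathbf{b}, Y\rangle}\right].
\end{align*}
The whole proposition then reduces to identifying which entry of $H\mu_X$ this expectation equals.

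The key step is an adjoint (transpose) manipulation of the inner product in the exponent. First I would substitute $Y = GX$ and use associativity of matrix multiplication to regroup:
\begin{align*}
	\langle \mathbf{b}, GX\rangle = \mathbf{b}^T G X = (\mathbf{b}^T G) X = \mathbf{c}^T X = \langle \mathbf{c}, X\rangle,
\end{align*}
where $\mathbf{c}^T = \mathbf{b}^T G$ is precisely the codeword of $\mathcal{C}$ associated with the message $\mathbf{b}$. Feeding this back into the expectation gives $\mathbb{E}[(-1)^{\langle \mathbf{c}, X\rangle}] = (H\mu_X)(\mathbf{c})$, by the same defining formula now applied in dimension $2^n$, which is exactly the claimed correspondence between the $\mathbf{b}$-th row of $H\mu_Y$ and the $\mathbf{c}$-th row of $H\mu_X$.

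The one point demanding care---and the main obstacle---is reconciling the two arithmetics in play: the codeword $\mathbf{c}^T = \mathbf{b}^T G$ is formed over $\mathbb{F}_2$, whereas the exponent of $(-1)$ is naturally an integer. I would verify that this mismatch is harmless by checking that $(-1)^{\tilde{c}_i X_i} = (-1)^{c_i X_i}$ for each coordinate, where $\tilde{c}_i$ is the integer value of the product and $c_i = \tilde{c}_i \bmod 2$ its reduction: both sides agree trivially when $X_i = 0$, and when $X_i = 1$ they agree because $(-1)^{\tilde{c}_i}$ depends only on the parity of $\tilde{c}_i$. This confirms that reducing $\mathbf{b}^T G$ modulo $2$ to the genuine codeword $\mathbf{c}$ does not change the value of the character, so the index selection is well defined. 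A secondary bookkeeping concern is to fix the convention identifying integers with vectors consistently on both sides, recalling that the two transforms denoted $H$ act on spaces of dimension $2^k$ and $2^n$ respectively, so that ``$\mathbf{b}$-th row'' and ``$\mathbf{c}$-th row'' refer to the indexing scheme already introduced for $\mu_Y$.
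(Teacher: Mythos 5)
Your argument is correct and is essentially identical to the paper's own proof: both write $\chi_Y(b)=\mathbb{E}[(-1)^{\mathbf{b}\cdot Y}]$, substitute $Y=GX$, and regroup the exponent as $\mathbf{c}\cdot X$ with $\mathbf{c}^T=\mathbf{b}^TG$ to land on the $\mathbf{c}$-th entry of $H\mu_X$. Your extra check that reducing the exponent modulo $2$ does not change the value of $(-1)^{(\cdot)}$ is a welcome precision the paper leaves implicit.
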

\begin{proof}

	The $b$-th row of the transform can be written as an expected value in terms of the $b$-th Walsh function
	\begin{align*}
		\chi_Y(b)	& = \mathbb{E}\left[h_b(Y)\right]	\\
					& = \mathbb{E}\left[ (-1)^{\mathbf{b}\cdot Y} \right]	\\
					& = \mathbb{E}\left[ (-1)^{\mathbf{b}\cdot GX} \right]	\\
					& = \mathbb{E}\left[ (-1)^{\mathbf{c}\cdot X} \right]	\\
					& = \mathbb{E}\left[h_c(X)\right]
	\end{align*}

\end{proof}

\begin{proposition}	\label{prop:linear_combination}
	
	If the individual bits of $X$ are i.i.d.\ with balance $\beta$, the $c$-th row of $H \mu_{X}$ equals the characteristic function of a linear combination of $w(c)$ bits, with $w(c)$ the Hamming weight of $c$; hence, the $c$-th row of $H\mu_X$ is equal to $\beta^{w(c)}$, and
	\begin{align}	\label{eqn:chi_and_beta}
		\chi_Y(b) = \beta^{w(c)} \,.
	\end{align}

\end{proposition}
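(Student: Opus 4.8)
The plan is to compute the $c$-th row of $H\mu_X$ directly as an expected value, exactly as was done for $\chi_Y$ in Proposition \ref{prop:selection}, and then exploit the independence of the bits of $X$ to factor that expectation into a product over the coordinates where $c$ is nonzero. Concretely, I would begin from
\begin{align*}
	\chi_X(c)	& = \mathbb{E}\left[h_c(X)\right]	\\
				& = \mathbb{E}\left[ (-1)^{\mathbf{c}\cdot X} \right]	\\
				& = \mathbb{E}\left[ \prod_{i=1}^{n} (-1)^{c_i X_i} \right] \,,
\end{align*}
where $c_i, X_i$ are the coordinates of $\mathbf{c}$ and $X$. The key structural observation is that a factor $(-1)^{c_i X_i}$ is identically $1$ when $c_i = 0$, so only the $w(c)$ coordinates with $c_i = 1$ contribute a nontrivial term; this is what reduces the full product to a linear combination of exactly $w(c)$ bits.

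The central step is then to pass the product through the expectation. Because the bits of $X$ are assumed i.i.d., the random variables $(-1)^{X_i}$ are independent, so the expectation of the product equals the product of the expectations. I would compute the single-coordinate expectation explicitly:
\begin{align*}
	\mathbb{E}\left[(-1)^{X_i}\right]	& = \mathbb{P}(X_i = 0) - \mathbb{P}(X_i = 1)	\\
										& = \beta \,,
\end{align*}
which is precisely the definition of the balance $\beta$ introduced at the start of Section \ref{sec:algorithm_derivation}. Substituting this value into each of the $w(c)$ surviving factors yields $\chi_X(c) = \beta^{w(c)}$.

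Finally, I would combine this with Proposition \ref{prop:selection}, which identifies $\chi_Y(b)$ with $\chi_X(c)$ for the codeword $\mathbf{c}^T = \mathbf{b}^T G$, to conclude $\chi_Y(b) = \beta^{w(c)}$, establishing \eqref{eqn:chi_and_beta}. I do not expect a serious obstacle here, since the argument is a routine independence-factorisation; the only point requiring care is the bookkeeping that a zero coordinate of $\mathbf{c}$ contributes a trivial factor of $1$ rather than being silently dropped, so that the exponent is genuinely $w(c)$ and not $n$. Making that reduction explicit is what justifies calling $\chi_X(c)$ the characteristic function of a linear combination of $w(c)$ bits.
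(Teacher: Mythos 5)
Your proof is correct and follows essentially the same route as the paper: both reduce $\mathbf{c}\cdot X$ to the $w(c)$ coordinates where $\mathbf{c}$ is nonzero and then use the independence of the bits to obtain $\beta^{w(c)}$, finishing via Proposition \ref{prop:selection}. The only difference is presentational: the paper packages the independence step into the (unproved) intermediate claim that the XOR-sum $S_n$ of $n$ i.i.d.\ bits has characteristic $[1,\beta^n]^T$ and then evaluates $\mathbb{E}\bigl[(-1)^{S_{w(\mathbf{c})}}\bigr]$ from the mass function of $S_{w(\mathbf{c})}$, whereas you factor $\mathbb{E}\bigl[\prod_{i}(-1)^{c_i X_i}\bigr]$ directly into a product of single-bit expectations, which makes that intermediate fact explicit rather than assumed.
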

\begin{proof}
	\label{proof:sum}

	On the assumption that each $X(j)$ is i.i.d.\ with balance $\beta$,
	\begin{align*}
		\chi_Y(b)	& = \mathbb{E}\left[ (-1)^{\mathbf{c}\cdot X} \right]	\\
					& = \mathbb{E}\left[ (-1)^{\sum_{i=0}^{w(\mathbf{c})-1} X(i)} \right]
	\end{align*}
	The random variable $S_n = \sum_{i=0}^{n-1} X(i)$ has characteristic $\chi_{S_n}=[1,\beta^n]^T$; hence,
	\begin{align*}
		\chi_Y(b)	& = \mathbb{E}\left[ (-1)^{S_{w(\mathbf{c})}} \right]	\\
					& = (-1)^0 \mu_{S_{w(\mathbf{c})}}(0) + (-1)^1 \mu_{S_{w(\mathbf{c})}}(1) \\
					& = \beta^{w(\mathbf{c})}
	\end{align*}
	
\end{proof}

Note that the zeroth word will always lead to $\chi_X(\mathbf{0})=1$, which we can safely ignore by considering the variable $\chi_Y^* = \chi_Y - \chi_{\mathcal{U}}$, since the uniform distribution over whatever space $Y$ is defined will always have all other entries equal to $0$. Assuming no special ordering of the words can be found, we can treat $\chi_Y^*$ as a random variable to be sampled. The resulting algorithm is as follows:
\begin{algorithm}[Binary linear code weight distribution estimation]	\label{alg:binary_weight_estimation}
	
	Given a binary linear code generator matrix $G$ and a random number generator producing independent output bits with a known, fixed balance $\beta$:
	\begin{enumerate}
		\item	generate $N$ vectors $x=\set{x_i}_{i=1}^{n}$, each of $n$ independent random bits;
		\item	compute $y=Gx$ for all $x$; each $y$ is a sample of a random variable $Y$;
		\item	estimate the mass function of $Y$ using the $N$ resulting samples;
		\item	to recover the exponents estimating the weights $w(c)$, from Eq.\ (\ref{eqn:chi_and_beta}) compute
				\begin{align*}
					\hat{l}(b) = \log_\varepsilon|\chi_Y^*(b)| \,;
				\end{align*}
		\item	round $\hat{l}(b)$ to the nearest integer, $\mathrm{int }(\hat{l}(b))$;
		\item	estimate the weight distribution as
				\begin{align*}
					\hat{A}_l = \#\set{i \in \mathrm{int }(\hat{l}) | i=l}\,.
				\end{align*}
	\end{enumerate}
\end{algorithm}
The above is of course specific to binary codes in the use of $\varepsilon=|\beta|$.

\section{Convergence}
\label{sec:confidence}

Each $\chi_Y(b)$ should converge to $\beta^{w(\mathbf{c})}$ in expectation, from Proof \ref{proof:sum}.  We would like to have an estimate of the speed of this convergence as a function of sample size $s$ and $\beta$. We remark that if we could sample the random variable $\chi_Y(b)$ directly, the variance would decrease with more strongly unbalanced input:
\begin{align*}
	\mathrm{Var}\left((-1)^{\mathbf{b}\cdot Y}\right)	& = \mathbb{E}\left[ \left((-1)^{\mathbf{b}\cdot Y} \right)^2\right] - \left( \beta^{w(\mathbf{c})} \right)^2	\\
				& = 1-\beta^{2w(\mathbf{c})}
\end{align*}
Therefore, taking more strongly unbalanced random input, that is $\beta\approx \pm 1$ but not actually equal to $\pm 1$, should lead to a a finite sample having a tighter spread around the mean value. At the same time, a larger $\beta$ increases the spread between powers of $\beta$, and hence improves the distinguishability of each $\chi_Y(b)$.

In practice, Algorithm \ref{alg:binary_weight_estimation} samples the probability mass function of $Y$ by taking $s$ random samples $\mathbf{x}_s$ from $X$ and computing the sample mean
\begin{align*}
	m_Y(b) = \frac{1}{s}\sum_{j=0}^{s-1} \mathbf{1}_{G\mathbf{x}_j=\mathbf{b}}
\end{align*}
with the indicator function $\mathbf{1}_A=1$ if event $A$ is true, and $0$ otherwise. Note that
\begin{align}	\label{eqn:prob_Gx_equals}
	\mathbb{P}(GX=\mathbf{b}) = \mathbb{P}\left(\bigoplus_{j=0}^{n-1}\gamma_j X(j) = \mathbf{b}\right)
\end{align}
where $\gamma_j$ is the $j$th column of $G$, and we have been somewhat cavalier with notation in using multiplication between $\gamma_j \in (\mathbb{F}_2)^k$ and random $X(j)\in \mathbb{F}_2$ to indicate that either $\gamma_j$ or the zero element of $(\mathbb{F}_2)^k$ will appear in the sum, without defining this more precisely. Our aim is to highlight that there is in general more than one $X$ that will result in the same $\mathbf{b}$, and since each may well have a different Hamming weight, it will have a different probability of occurring as a function of $\beta$. It is therefore not obvious whether it is possible to quantify each $\mu_Y(j)$ individually as a function of $\beta$, which makes it difficult to draw conclusions about the speed of convergence as a function of $\beta$.

For a given $\beta$, we can at least say that Algorithm \ref{alg:binary_weight_estimation} does converge in expectation, as follows. Let $M_a$ be a Bernoulli random variable with probability of success $\mu_Y(a)$:
\begin{align*}
	\mu_{M_a} = [1-\mu_Y(a), \; \mu_Y(a)]^T
\end{align*}
and let $N(s,p)$ be a random variable with binomial distribution over $s$ trials with probability of success $p$. Then the distance between the true value and the sample mean estimator is
\begin{align*}
	\mu_Y(a) - m_Y(a)							& = \frac{1}{s}\sum_{j=0}^{s-1} M_a	\\
												& = \frac{1}{s}N(s,\mu_Y(a))	\\
	\mathbb{E}\left[\mu_Y(a) - m_Y(a)\right]	& = 0 	\\
	\mathrm{Var}(\mu_Y(a) - m_Y(a))				& = \mu_Y(a)(1-\mu_Y(a))
\end{align*}
Each estimator is therefore the difference of two sums of binomial estimators:
\begin{align*}
	\chi_Y(b)-\hat{\chi_Y}(b)	& = \sum_{j=0}^{2^k-1}h_b(j)\left(\mu_Y(j) - m_Y(j)	\right)	\\
								& = \left(\sum_{h_b(j)=1}\mu_Y(j) - m_Y(j)\right) - \left(\sum_{h_b(j)=-1}\mu_Y(j) - m_Y(j)\right)
\end{align*}
Each $\mu_Y(j) - m_Y(j)$ having mean $0$ simplifies this to the sum over $2^k$ binomials with a different variance, which remains unaffected by the subtraction.
\begin{align*}
	\mathrm{Var}(\chi_Y(b)-\hat{\chi_Y}(b))	& = \frac{1}{s^2}\mathrm{Var}\left( \sum_{j=0}^{2^k-1} N(s,\mu_Y(j))	\right)	
\end{align*}
Each $\mu_Y(j)$ being in principle different, we cannot say the resulting sum of binomials will be itself binomial.

\section{Numerical results}
\label{sec:discussion}

Algorithm \ref{alg:binary_weight_estimation} clearly compares unfavourably to the deterministic, brute-force solution: given $G$, one can simply compute every single word of the code by cycling through all $2^k$ possible messages $\mathbf{y}$ and performing a matrix multiplication $\mathbf{y}^T G$ for each of them. This yields the full list of codewords, from which the weight distribution $A_l$ may be immediately deduced. By contrast, as described in Section \ref{sec:confidence} we need to estimate the probability of each $\mathbf{y}$ occurring by repeated sampling, which will take some number of samples $s$ per message, depending on the required accuracy, meaning $s \cdot 2^k$ multiplications $G \mathbf{x}$. We can see an example of this in Figure \ref{fig:BCH_n7_d3_range_pOnes}, showing $\hat{\chi}$ after applying the generator matrix of a BCH(7,4,3) code as a compression function on several random bit streams generated with fixed but different $\beta$, and two separate values of $s$. For reference, the known weight distribution of this code is $A_l = [1, 0, 0, 7, 7, 0, 0, 1]$.

Since the estimators $\hat{l}$ are powers of $\beta$, a small $\beta$ leads to a smaller separation between estimators and hence a poorer overall estimate, for the same sample size (see Figure \ref{fig:BCH_n7_d3_range_pOnes}). Similarly, we expect that for $\beta\approx 1$ the estimators will tend to cluster around the same value, with decreasing variance as seen in Section \ref{sec:confidence}.
\begin{figure}
\centering
	\includegraphics[height=3.8cm]{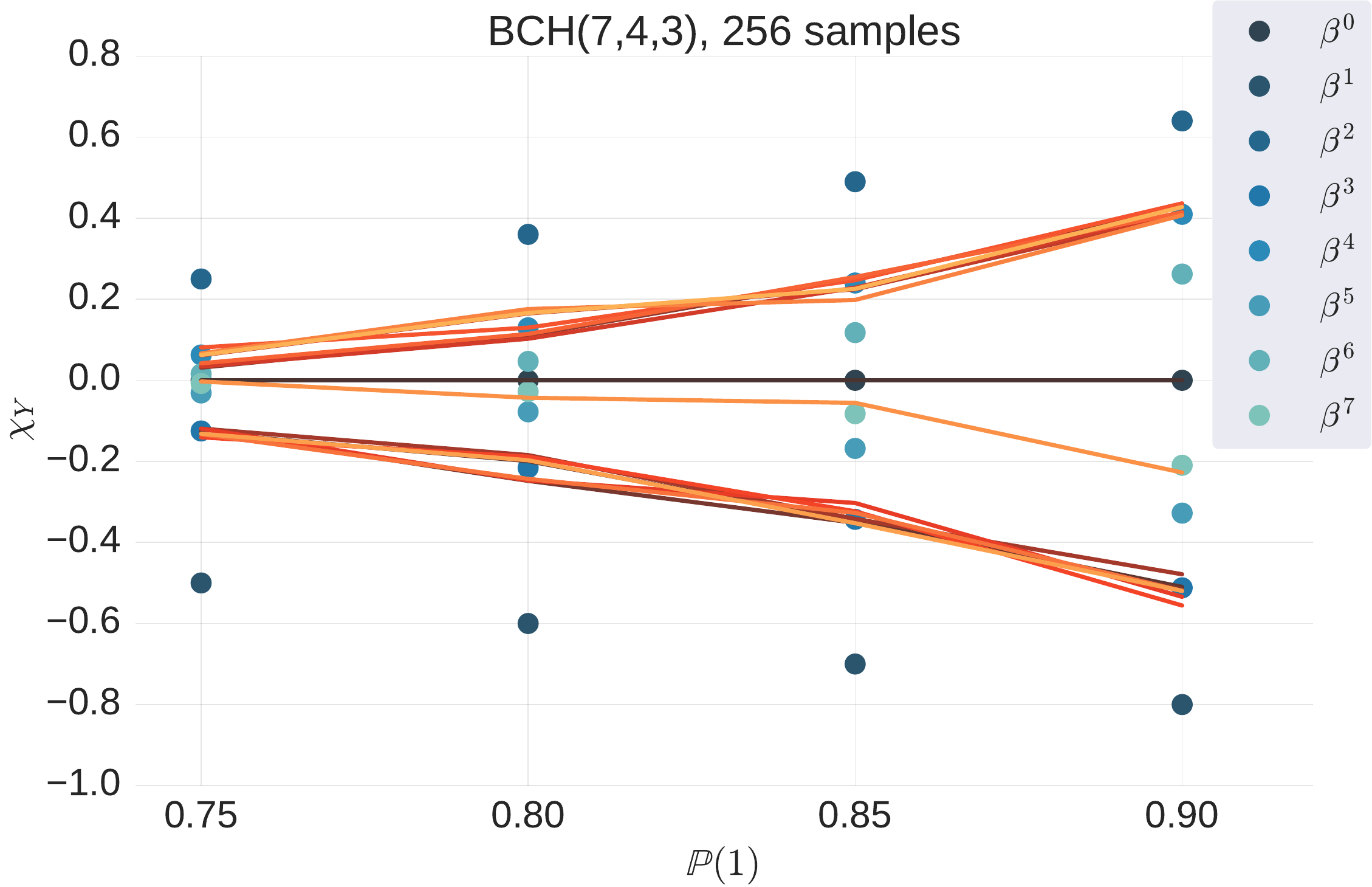}
	\includegraphics[height=3.8cm]{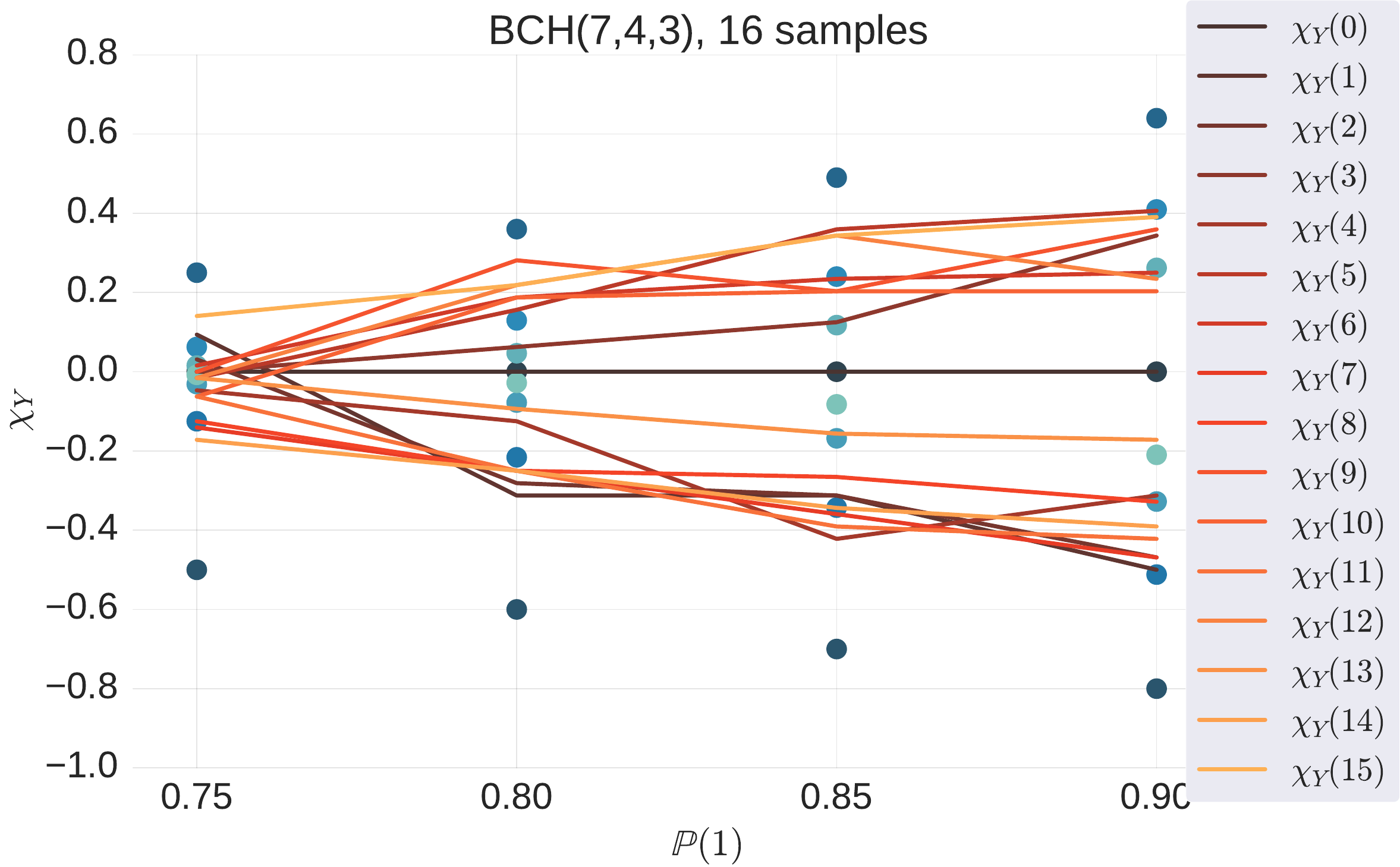}
	\caption{$\chi^*(j)$ as a function of the $\mathbb{P}(1)$ of each of the i.i.d.\ bits of the RBG, compared with all possible powers of $\beta$. These results were obtained using a number of random samples $s=2^{k+4}$ (left) and $s=2^k$ (right), the latter being the required number for a worst-case deterministic algorithm based on computing every codeword exhaustively.}
	\label{fig:BCH_n7_d3_range_pOnes}
\end{figure}
To provide a less unrealistic example we also considered the generator matrix of the BCH(33, 13, 5) code with increasing sample size. Without loss of generality we take a number of samples $s=2^{k-g}$, with $g$ a measure of trade-off between accuracy of the estimate and brute-force equivalence at $g=0$. Results are shown in Figure \ref{fig:BCH_n33_d5_range_samples}; we show a comparison of the estimated weight distribution $\hat{A_l}$ by way of example, and a the distance between the estimated and real weight distribution by comparing the normalized distributions in a total variation sense:
\begin{align*}
	W_l					& = \frac{A_l}{2^k}	\\
	TVD(W_l, \hat{W_l})	& = \frac{1}{2}\sum_{j=0}^{k-1} |W_l(j)-\hat{W_l}(j)|
\end{align*}
\begin{figure}
\centering
	\includegraphics[height=4.2cm]{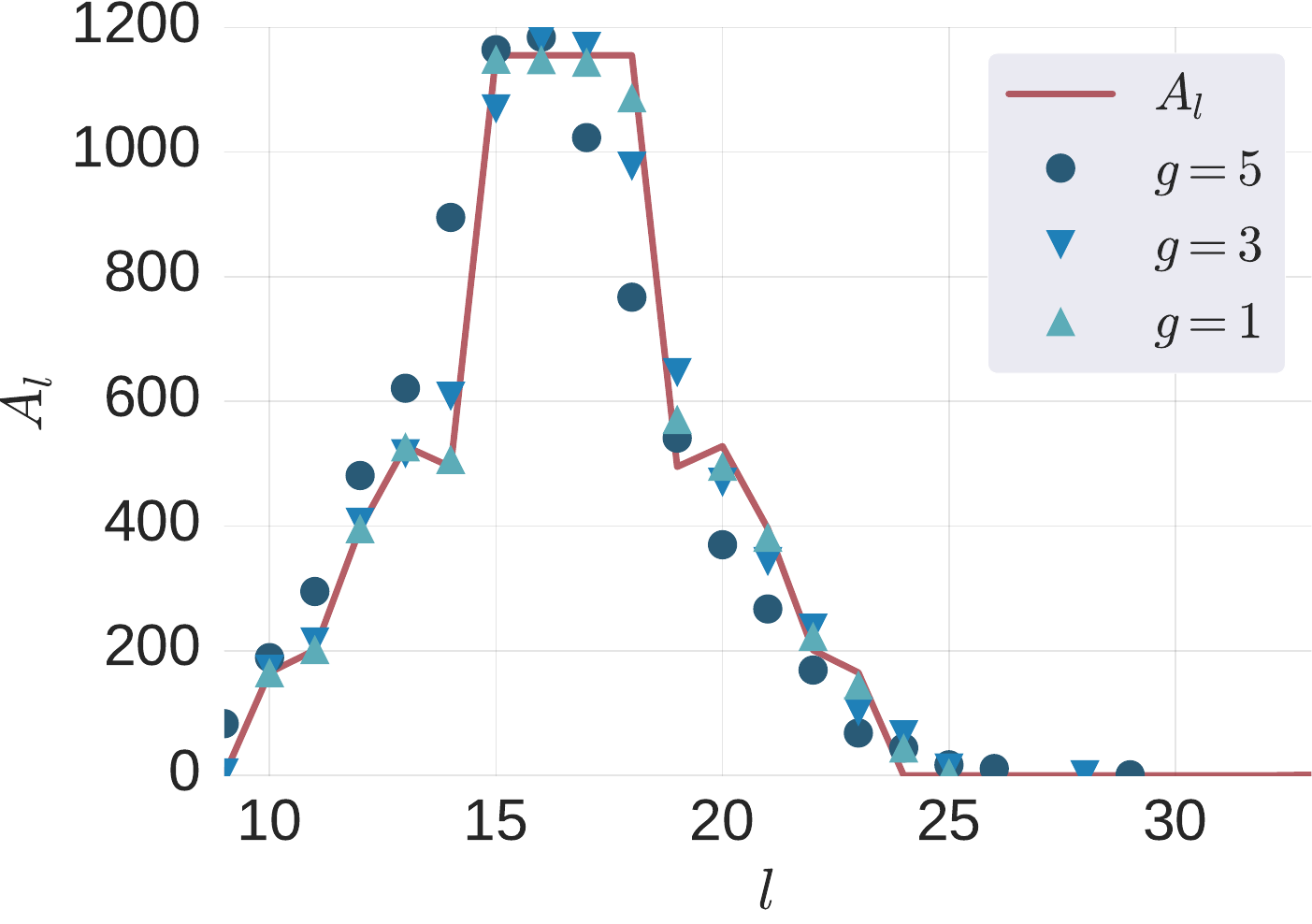}
	\includegraphics[height=4.2cm]{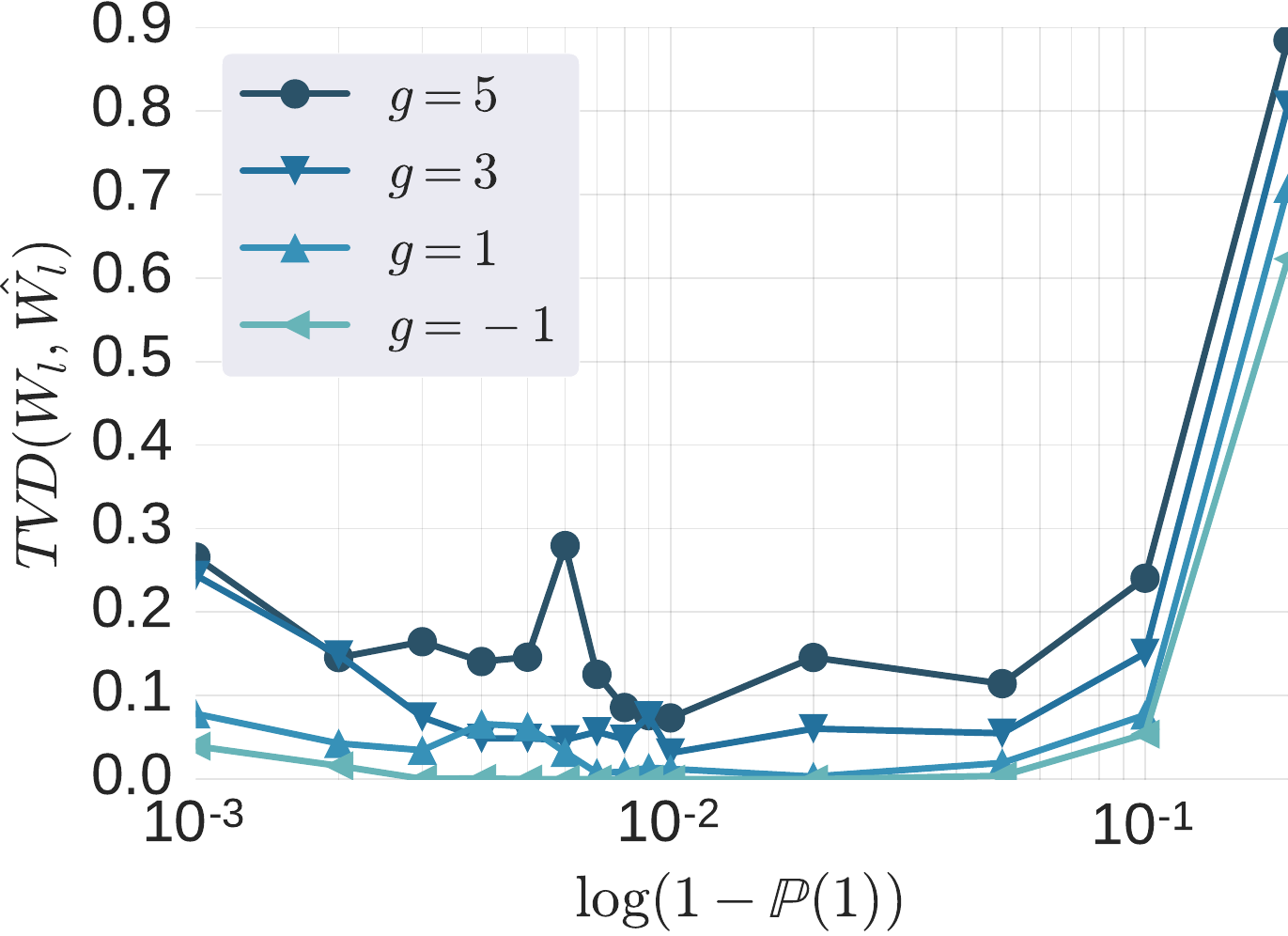}
	\caption{Application of Algorithm \ref{alg:binary_weight_estimation} by conditioning of a bit stream by the generator matrix of the BCH(33, 13, 5) code. Estimates produced with a number of samples $s=2^{k-g}$, where $2^k$ operations would be the brute-force deterministic method cost. Left: visual representation of the estimated weight distribution $\hat{A_l}$ obtained with $\beta=-.95$.}
	\label{fig:BCH_n33_d5_range_samples}
\end{figure}

\section{Related work and improvements}
\label{sec:related_work}

For a linear code $\mathcal{C}$, the minimum weight is equivalent to the minimum distance, $d$. This is of interest to both coding theorists and cryptographers, and the algorithms used are of a somewhat affine nature to the one here proposed, so we briefly discuss the topic - mostly following the summary in \cite{Canteaut_Chabaud_98} and \cite{May_Meurer_Thomae_11}. In one specific instance \cite{Hirotomo_Mohri_Morii_05}, the Stern algorithm for the minimum distance \cite{Stern_88} was extended to the estimation of the whole weight distribution of LDPC codes.

In coding theory, the minimum distance $d$ determines the correction capacity of the code, $t=\lfloor (d-1)/2\rfloor$, and algorithms for finding a specific codeword of minimum weight can be used to correct a received input $x=c+e$, for some $x \in \mathcal{C}$ with errors of small Hamming weight $w(e)\leq t$, by noting that $e$ is the word of minimum weight in the extended code $\mathcal{C}'=\mathcal{C}\oplus x$, and $c$ is the unique closest codeword to $x$.

In cryptography, the security of code-based ciphers has been reduced, in whole or in part, to the problem of finding the minimum weight codeword. For instance, in the McEliece cryptosystem \cite{McEliece_78} the private key is composed of a linear $[n,k,d]$ code generator matrix $G$, an invertible $k \times k$ scrambling matrix $S$, and $n \times n$ permutation matrix $P$; the public key is $\Gamma=SGP$, which generates a code with the same $[n,k,d]$ as $G$; and a ciphertext is computed by encoding a message $m$ and adding an error vector of weight $w(e)\leq t$:
\begin{align*}
	\mu = m\Gamma + e \,.
\end{align*}
The decrypting receiver then computes
\begin{align*}
	\mu P^{-1} = mSG + eP^{-1} \,,
\end{align*}
corrects the permuted errors by applying a decoding algorithm, and hence obtains $m = (mS) S^{-1}$.

Lee and Brickell observed \cite{Lee_Brickell_88} that the best cryptanalytic attack consisted in choosing a set $I$ of $k$ random elements of $\mu$, selecting the corresponding columns of $\Gamma$ as a $k \times k$ matrix $\Gamma_I$, and computing $\tilde{m} = \mu \Gamma_{I}^{-1}$. If $e(I)=0$, this procedure returns the correct $\tilde{m}=m$. They also observe that not only is it costly to invert a random $k \times k$ matrix, there is also a cost associated with checking that the result is correct: if $\tilde{m} \neq m$, then $w(m\Gamma + \tilde{m}\Gamma)\geq d$. One would therefore also have to compute $w(c + \mu(I) \Gamma_{I}^{-1}\Gamma)\leq t$ to be able to claim that $\tilde{m}=m$.

Lee and Brickell further generalized their algorithm to include the possibility that the random set $I$ includes a small number $j$ of errors. Their original algorithm can be written as follows:
\begin{algorithm}[Lee-Brickell \cite{Lee_Brickell_88}]	\label{alg:Lee-Brickell}
	
	Given a McEliece public key $\Gamma$ and a ciphertext $\mu = m\Gamma$:
	\begin{enumerate}
		\item	Select a $k$-element indexing set, $i$ (see Remark \ref{remark:indexing}). Compute
				\begin{align*}
					\Gamma' = \Gamma(\cdot, i)^{-1}\Gamma
				\end{align*}
		\item	Draw $e$ from $E \sim \mathcal{U} \left( \{y \in \mathbb{F}_2^k | w(y)=j\} \right)$, at random without replacement; compute:
				\begin{align*}
					\tilde{m} = \mu + \mu(i)\Gamma' + e \Gamma'
				\end{align*}
				If at any point $w(\tilde{m}) < t$, return $m = \tilde{m}$. Otherwise, restart with a different $i$.
	\end{enumerate}
\end{algorithm}
\begin{remark}[on indexing]\label{remark:indexing}
	There is no particular need to be prescriptive about how a $k$-element indexing is defined. It may be thought of as drawing $k$ times a uniform random variable without replacement from $\mathbb{Z}_n$, though this would require further specification for subsequent draws; or it can be thought of as drawing once without replacement from $I \sim \mathcal{U} \left( \{x \in \mathbb{F}_2^n | w(x)=k\} \right)$, which then requires a slightly more precise meaning to how this is used as an indexing. Since it is not critical to this overview of related work, we skip the details.
\end{remark}

The Lee-Brickell algorithm, as well as others, can be interpreted in the context of information set decoding, an equivalent definition of which can be given in terms of the generator matrix or the parity check matrix - see for instance \cite{Canteaut_Chabaud_98} and \cite{May_Meurer_Thomae_11}.
\begin{definition}[Information Set]

	Let $I$ be a $k$-element indexing of the colmuns of the parity check matrix $H$, and denote by $H_I$ a column-wise permutation of $H$ such that
	\begin{align*}
		H_I	& =(V | W)_I	\\
		W				& = H(\cdot,i)_{i\in I}	\\
		V				& = H(\cdot,j)_{j\notin I}
	\end{align*}
	with $H(\cdot,i)$ the $i$th column of $H$.

	$I$ is called an \emph{information set} for the code $\mathcal{C}$ if and only if $ H_{I} = (Z | \mathbb{I}_{n-k})_{I}$ is a systematic matrix for the code $\mathcal{C}$ - with $\mathbb{I}_k$ the identity matrix of size $k \times k$. The complementary set 
is called a redundancy set.

\end{definition}

From a received message $x=c+e$, the parity check matrix can be used to compute the syndrome $y=He$. The error vector $e$ defines an index set $E$ of columns of $H$ such that $\sum_{j \in E}H(\cdot,j) = e$. The cardinality of this set is unknown, but as long as $\#E\leq t$, decoding will succeed. Syndrome decoding is thus equivalent to finding the set $E$, so that $c=x-\mathbf{1}_E$.

The Lee-Brickell algorithm can be written as follows (see \cite{May_Meurer_Thomae_11}):
\begin{algorithm}[Lee-Brickell - ISD]	\label{alg:Lee-Brickell-ISD}
	
	Let $H$ be a parity check matrix $H$, $U_i \in \mathbb{F}_2^{n \times n}$ be a permutation matrix, $U_\mathcal{G} \in \mathbb{F}_2^{(n-k) \times (n-k)}$ be a Gaussian elimination matrix, and $x$ be a received message. Suppose we look for errors of a specific weight, $w(e)=\eta$.\footnote{This is especially reasonable in the case of a cryptosystem, where one might set $w(e)=t$, as large as possible.}
	\begin{enumerate}
		\item	Draw a random index set $i$ defining a permutation $U_i$. If Gaussian elimination succeeds, we have
				\begin{align*}
					(Z_i | \mathbb{I}) = U_\mathcal{G} (H U_i )
				\end{align*}
		\item	Compute the permuted syndrome $y = U_\mathcal{G}Hx$. Fix a weight $j \leq \eta$. If
				\begin{align*}
					w\left(y \oplus \bigoplus_{b=0}^{k-1}Z_i(\cdot,b)\right) = \eta-j
				\end{align*}
				then we can choose another set $J$ of $\#J=j$ columns from $\mathbb{I}_{n-k}$ to obtain a set $I\cup J$ of columns of $H$ that sum to $y$. Otherwise, restart with a different $i$.
	\end{enumerate}
\end{algorithm}

This algorithm succeeds if the permutation $i$ shifts $e$ exactly so that its first $k$ entries sum to $j$, and the remainder to $\eta-j$.

Further work \cite{Stern_88, Canteaut_Chabaud_98, Finiasz_Sendrier_09, May_Meurer_Thomae_11} makes improvements by more specific assumptions about the non-zero locations of the error vector, such as a contiguous error-free region, or by speed-ups of the Gaussian elimination step.

With respect to the Information Set decoding techniques, 
while Algorithm \ref{alg:binary_weight_estimation} here proposed attempts to solve a different problem and does not require the computation of a systematic form of the matrix $G$, it is interesting to examine the possibility.

Given a systematic form of $G$ written as $(G_s|\mathbb{I}_{k})$, and denoting the $j$th row of $G_s$ as $g_j \in \mathbb{F}_2^{n-k}$, each $Y(j)$ can be written as
\begin{align}	\label{eqn:Yj_systematic}
	Y(j) = B \bigoplus_{i=0}^{n-k-1}g_j(i)X(i)
\end{align}

Each $X(i)$ is a Bernoulli random variable $B$ with balance $\beta$, and compare with Eq. (\ref{eqn:prob_Gx_equals}). Eq. (\ref{eqn:Yj_systematic}) is written to emphasize the fact that the $X(i)$ are the same for all $Y(j)$, which are clearly correlated, plus a single $B$ independent of all the others. It would be possible to modify Algorithm \ref{alg:binary_weight_estimation} to carry out no more than $k(n-k)$ XOR computations for each sample of $Y$, though it would still require drawing $n$ independent random samples of $B$ to compose each $X$.

\section{Conclusion}

We have shown an algorithm that is computationally inefficient but demonstrates an interesting link between the weight distribution of a linear code and the use of its generator matrix as a compression function. We have shown its convergence in theory and in a practical example.

\section*{Acknowledgments}

This research was partly funded by the Autonomous Province of Trento, Call ``Grandi Progetti 2012'', project ``On silicon quantum optics for quantum computing and secure communications - SiQuro''.

The authors would like to thank M.\ Sala and M.\ Piva for insightful discussions on the subject.

\bibliographystyle{abbrvnat}
\bibliography{../RefsA,../RefsACoding,../RefsACrypto}

\end{document}